\newtheorem{theorem}{Theorem}
\begin{document}
\begin{CJK}{UTF8}{gbsn}

\begin{frontmatter}

\title{QUBO Formulations for Variation of Domination Problem}

\author[a]{Haoqian Pan}
\author[a]{Changhong Lu}
\affiliation[a]{organization={School of Mathematical Sciences,  Key Laboratory of MEA(Ministry of Education) \& Shanghai Key Laboratory of PMMP, East China Normal University}, 
            city={Shanghai},
            postcode={200241},
            country={China}}

\begin{abstract}

With the development of quantum computing, the use of quantum algorithms to solve combinatorial optimization problems on quantum computers has become a major research focus. The Quadratic Unconstrained Binary Optimization (QUBO) model serves as a bridge between combinatorial optimization problems and quantum computers, and is a prerequisite for these studies. In combinatorial optimization problems, the Domination Problem (DP) is related to many practical issues in the real world, such as the fire station problem, social network theory, and so on. Additionally, the DP has numerous variants, such as independent DP, total DP, k-domination, and so forth. However, there is a scarcity of quantum computing research on these variant problems. A possible reason for this is the lack of research on QUBO modeling for these issues. This paper investigates the QUBO modeling methods for the classic DP and its variants. Compared to previous studies, the QUBO modeling method we propose for the classic DP can utilize fewer qubits. This will lower the barrier for solving DP on quantum computers. At the same time, for many variants of DP problems, we provide their QUBO modeling methods for the first time. Our work will accelerate the entry of DP into the quantum era.
\end{abstract}

\begin{keyword}
    QUBO \sep Domination Problem \sep Quantum Computer
\end{keyword}

\end{frontmatter}

%% \linenumbers

%% main text
\section{Introduction}

In recent years, with the progress of quantum computing \citep{RN432,RN394,RN431,RN437,RN438}, and quantum algorithms such as Quantum Annealing (QA) \citep{RN433}, Quantum Approximate Optimization Algorithm (QAOA) \citep{RN343}, Variational Quantum Eigensolver (VQE) \citep{RN434}, and so forth, an increasing number of combinatorial optimization problems are being attempted to be solved using quantum computers. For instance, max-cut \citep{RN343}, minimum vertex cover \citep{RN334}, perfect matching \citep{RN435}, maximum clique \citep{RN425}, and so on. The key to solving combinatorial optimization problems with quantum computers is to convert the problem into a Quadratic Unconstrained Binary Optimization (QUBO) problem. The basic QUBO formulation is given by Eq. \ref{eq:qubo} where $x$ is the vector of binary variables and $Q$ is the matrix of constants known as the QUBO matrix. 

\begin{equation}
    minimize/maximize \quad y = x^{t}Qx \label{eq:qubo}
\end{equation}

The combinatorial optimization problem discussed in this paper is the Domination Problem (DP), which is a classic issue in graph theory. A dominating set \( D \) of \( G = (V, E) \) is the subset of \( V \), and for every \( u \in V \setminus D \), there is at least one node in \( D \) which is adjacent to \( u \) as shown in Fig. \ref{fig:dsp}. The DP is about finding the smallest such \( D \). The DP is related to many practical issues in the real world, such as the fire station problem, sets of representatives, social networks, and so on. Besides the classic problem, there are many variants of the DP problem, such as independent DP, total DP \citep{RN436}, and so forth.

\begin{figure}[H]
    \centering
    \includegraphics[width=10cm]{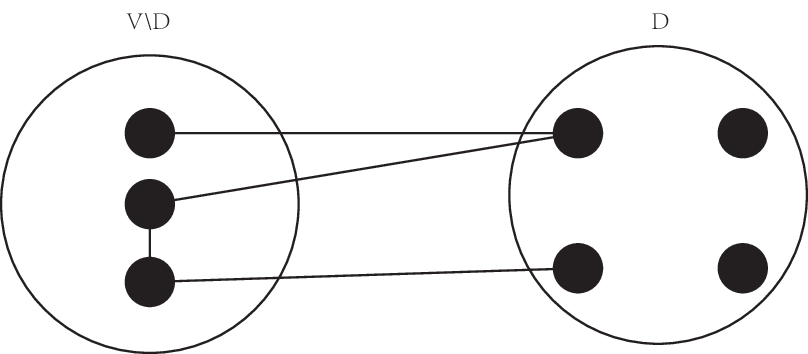}
    % \centering
    \caption {The illustration of DP}
    \label{fig:dsp}
\end{figure}

Utilizing quantum computers to solve DP, especially its variant problems, is an area where the literature is almost non-existent. Currently, there are only a few studies on solving DP using quantum computers. For instance, \cite{RN416} was the first to use the IBM Poughkeepsie machine to solve the classical DP. We believe this scarcity is related to the lack of research on QUBO modeling for the DP and its variants, as QUBO modeling is the bridge connecting combinatorial optimization and quantum computing. This paper primarily investigates the QUBO modeling of DP and its variants. We first model both the DP and its variants as 0-1 integer programming models. Then, we organize and summarize the constraints of these problems and provide methods for converting them into quadratic penalties. By adding quadratic penalties to the original objective function, we obtain the corresponding QUBO model. Our main contributions are: (1) Compared to \cite{RN416}, which requires \( |V|^2 \) variables to model the DP, we propose a new method for modeling the DP as a QUBO problem that only requires \( |V| + 2|E| \) variables. (2) We have performed QUBO modeling for most of the variant problems compiled in \cite{RN436}. Our research builds a bridge between the DP and its variants and quantum computing. This will accelerate the solution of DP and its variant problems into the quantum era.

The subsequent organization of this paper is as follows. In Section \ref{sec:review}, we review the literature related to QUBO modeling or quantum computing. In Section \ref{sec:method}, we introduce the method for transforming the DP and its variants into QUBO models, and provide their respective models in conjunction with a specific graph \( G \). In Section \ref{sec:conclusion}, we summarize the entire paper.

\section{Literature review}\label{sec:review}

In recent years, with the advancement of quantum computing, an increasing number of studies have attempted to transform combinatorial optimization problems into QUBO models. A significant portion of these studies focuses on classic problems in graph theory. \cite{RN417} discussed the QUBO formulation of many combinatorial optimization problems in graph theory, such as partitioning problems, covering and packing problems, coloring problems, and Hamiltonian cycles. \cite{RN418} demonstrated and compared various QUBO modeling methods for graph isomorphism problems, and their proposed direct formulation method has been experimentally verified to be more feasible. \cite{RN425} modeled the maximum clique problem as a QUBO model and solved this problem on the D-Wave 2X quantum annealer. \cite{RN426} solved the maximum flow problem in QUBO form using the D-Wave quantum annealing machine with the DW_2000Q_6 chip under different cases. Similar studies include weighted maximum cut and weighted maximum clique \citep{RN427}. Additionally, \cite{RN414} provided a more systematic introduction to converting combinatorial optimization models into QUBO models from a mathematical perspective. Unlike the classic problems in graph theory, some literature considered converting traditional machine learning field problems into QUBO models. \cite{RN423} gave the formulation of the k-medoids problem, such as the $L_{1}$-norm given by \cite{RN422}. \cite{RN424} considered linear regression, support vector machines, and k-means clustering. \cite{RN429} looked at balanced k-means clustering. In addition, there are many documents that transform practical problems in the industry into QUBO problems for solving. \cite{RN428} transformed the satellite scheduling problem into a QUBO model. \cite{RN430} considered a practical railway dispatching problem from the Polish railway network and solved the QUBO problems using D-Wave quantum annealers. \cite{RN419} used the quantum annealing algorithm to solve the vehicle routing problem, while \cite{RN420} and \cite{RN421} further considered the capacitated vehicle routing problem and the multi-Depot capacitated vehicle routing problem. It can be seen that with the development of quantum computing, traditional combinatorial optimization fields, machine learning fields, and practical problems in the industry are all being studied on how to model them as QUBO problems to try to solve them on quantum computers.

\section{Method} \label{sec:method}

This section will provide a detailed introduction on how to convert the DP and its variants into QUBO models. We begin by formulating the DP and its variants into 0-1 integer programming models. We then transform the constraints in these models into quadratic penalties. Finally, by adding these quadratic penalties to the original objective function, we obtain the QUBO models for these problems. In Section \ref{sec:method1}, we provide the specific definitions of the DP and its variants. In Section \ref{sec:method2}, we present the 0-1 integer programming models for the DP and seven of its variants. Following that, in Section \ref{sec:method3}, we categorize the constraint conditions of these models into four types and provide methods for converting them into quadratic penalties. Lastly, in Section \ref{sec:method4}, we combine specific examples to show how, by adding quadratic penalties to the original objective function, we derive the QUBO models for these problems.

\subsection{Definitions of DP and its variation}\label{sec:method1}

We begin by providing the definition of the classic DP. Given a graph \( G = (V, E) \), a dominating set \( D \) of graph \( G \) is defined as a subset of \( V \), such that for any vertex \( v \) in \( V \setminus D \), there is at least one neighbor in \( D \). The number of vertices in such a \( D \) is referred to as the domination number \( \gamma \). The DP is to find the smallest such \( \gamma \). In addition to the classic DP, there is a multitude of variant problems. For the convenience of the reader, we have compiled these problems in Table. \ref{table:defdsps}. In the table, \( G[D] \) is the subgraph of graph \( G \) induced by \( D \), \( N[v] \) and the subsequently used \( N(v) \) are the closed and open neighborhoods of \( v \), respectively. \( d(u, v) \) is the shortest distance between \( u \) and \( v \).

\begin{longtable}{|c|c|c|}
  \caption{Variation of domination set problems}
  \label{table:defdsps}\\
  \hline
  type of DP & label & definition\\
  \hline
  \endfirsthead
  \multicolumn{3}{r}{Continued}\\
  \hline
  type of DP & label & definition\\
  \hline
  \endhead
  \hline
  \multicolumn{3}{r}{Continued on next page}\\
  \endfoot
  \endlastfoot
  \hline
  classical&$\gamma$& \\
  \hline
  independent&$\gamma_{i}$&$G[D]$ has no edges.\\
  \hline
  total &$\gamma_{t}$&$G[D]$ has no isolated vertices.\\
  \hline
  perfect &$\gamma_{per}$&   $|N[v] \cap D| = 1$ for any $v \in V \backslash D$.\\
  \hline
  clique &$\gamma_{cl}$& $D$ is a clique.\\
  \hline
  independent and perfect &$\gamma_{iper}$& $D$ is independent and perfect.\\
  \hline
  total and perfect &$\gamma_{tper}$& $D$ is total and perfect.\\
  \hline
  k-domination &$\gamma_{k}$& $\forall v \in V, \exists u \in D$, $d(u,v) \leq k$\\
  \hline
\end{longtable}

\subsection{Mathematical model}\label{sec:method2}

We model the classic problem and its variants as 0-1 integer programming models. we define a binary variable \( x_v \) for each vertex \( v \in V \), where: \( x_v = 1 \) if vertex \( v \) is included in the dominating set \( D \), \( x_v = 0 \) otherwise.

\begin{itemize}
  \item $\gamma$
  \begin{alignat}{2}
    \min_{\{X_{i}\}} \quad & \sum\limits_{i=1}^{|V|} X_{i} & \tag{$\gamma$.1}\\
    \mbox{s.t.}\quad
    &\sum\limits_{j \in N[i]} X_{j} \ge 1 \quad \forall i \in V  & \tag{$\gamma$.2} \label{const:c2}\\
    &X_{i} \in \{0,1\}  \quad \forall i \in V &\tag{$\gamma$.3}
  \end{alignat}
  
  Constraint \ref{const:c2} ensures that at least one vertex in the closed neighborhood of each vertex belongs to the set \( D \).

  \item $\gamma_{i}$
  \begin{alignat}{2}
    \min_{\{X_{i}\}} \quad & \sum\limits_{i=1}^{|V|} X_{i} & \tag{$\gamma_{i}$.1}\\
    \mbox{s.t.}\quad
    &\sum\limits_{j \in N[i]} X_{j} \ge 1 \quad \forall i \in V  & \tag{$\gamma_{i}$.2}\\
    &X_{i} \in \{0,1\}  \quad \forall i \in V &\tag{$\gamma_{i}$.3}\\
    &X_{i} * X_{j} = 0 \quad \forall ij \in E &\tag{$\gamma_{i}$.4} \label{const:i4}
  \end{alignat}

  Constraint \ref{const:i4} ensures that the dominating set \( D \) is independent.

  \item $\gamma_{t}$
  \begin{alignat}{2}
    \min_{\{X_{i}\}} \quad & \sum\limits_{i=1}^{|V|} X_{i} & \tag{$\gamma_{t}$.1}\\
    \mbox{s.t.}\quad
    &\sum\limits_{j \in N(i)} X_{j} \ge 1 \quad \forall i \in V  & \tag{$\gamma_{t}$.2}\\
    &X_{i} \in \{0,1\}  \quad \forall i \in V &\tag{$\gamma_{t}$.3}
  \end{alignat}
  \item $\gamma_{per}$
  \begin{alignat}{2}
    \min_{\{X_{i}\}} \quad & \sum\limits_{i=1}^{|V|} X_{i} & \tag{$\gamma_{per}$.1}\\
    \mbox{s.t.}\quad
    &\sum\limits_{j \in N[i]} X_{j} \ge 1 \quad \forall i \in V  & \tag{$\gamma_{per}$.2}\\
    &X_{i} \in \{0,1\}  \quad \forall i \in V &\tag{$\gamma_{per}$.3} \\
    &\sum\limits_{ij \in E} [X_{i} * (1 - X_{j}) + X_{j} * (1 - X_{i}) ]  = |V| - \sum\limits_{i}^{|V|} X_{i} &\tag{$\gamma_{per}$.4}
  \end{alignat}

  $|V| - \sum\limits_{i}^{|V|} X_{i}$ denotes the number of vertexes in $V$\textbackslash$D$ and $\sum\limits_{ij \in E} [X_{i} * (1 - X_{j}) + X_{j} * (1 - X_{i}) ]$ can count the number of edges from $V$\textbackslash$D$ to $D$.

  \item $\gamma_{cl}$
  \begin{alignat}{2}
    \min_{\{X_{i}\}} \quad & c = \sum\limits_{i=1}^{|V|} X_{i} & \tag{$\gamma_{cl}$.1}\\
    \mbox{s.t.}\quad
    &\sum\limits_{j \in N[i]} X_{j} \ge 1 \quad \forall i \in V  & \tag{$\gamma_{cl}$.2}\\
    &X_{i} \in \{0,1\}  \quad \forall i \in V &\tag{$\gamma_{cl}$.3} \\
    & \frac{1}{2}c(c - 1) =  \sum\limits_{ij \in E} X_{i} * X_{j} &\tag{$\gamma_{cl}$.4}
  \end{alignat}
  \item $\gamma_{iper}$
  \begin{alignat}{2}
    \min_{\{X_{i}\}} \quad & \sum\limits_{i=1}^{|V|} X_{i} & \tag{$\gamma_{iper}$.1}\\
    \mbox{s.t.}\quad
    &\sum\limits_{j \in N[i]} X_{j} \ge 1 \quad \forall i \in V  & \tag{$\gamma_{iper}$.2}\\
    &X_{i} \in \{0,1\}  \quad \forall i \in V &\tag{$\gamma_{iper}$.3}\\
    &X_{i} * X_{j} = 0 \quad \forall ij \in E &\tag{$\gamma_{iper}$.4} \\
    &\sum\limits_{ij \in E} [X_{i} * (1 - X_{j}) + X_{j} * (1 - X_{i}) ]  = |V| - \sum\limits_{i}^{|V|} X_{i} &\tag{$\gamma_{iper}$.5}
  \end{alignat}
  \item $\gamma_{tper}$
  \begin{alignat}{2}
    \min_{\{X_{i}\}} \quad & \sum\limits_{i=1}^{|V|} X_{i} & \tag{$\gamma_{tper}$.1}\\
    \mbox{s.t.}\quad
    &\sum\limits_{j \in N(i)} X_{j} \ge 1 \quad \forall i \in V  & \tag{$\gamma_{tper}$.2}\\
    &X_{i} \in \{0,1\}  \quad \forall i \in V &\tag{$\gamma_{tper}$.3}\\
    &\sum\limits_{ij \in E} [X_{i} * (1 - X_{j}) + X_{j} * (1 - X_{i}) ]  = |V| - \sum\limits_{i}^{|V|} X_{i} &\tag{$\gamma_{tper}$.4}
  \end{alignat}
  \item $\gamma_{k}$
  \begin{alignat}{2}
    \min_{\{X_{i}\}} \quad & \sum\limits_{i=1}^{|V|} X_{i} & \tag{$\gamma_{k}$.1}\\
    \mbox{s.t.}\quad
    &\sum\limits_{j \in N^{\prime}[i]} X_{j} \ge 1 \quad \forall i \in V  & \tag{$\gamma_{k}$.2}\\
    &X_{i} \in \{0,1\}  \quad \forall i \in V &\tag{$\gamma_{k}$.3}
  \end{alignat}

  Our approach to constructing the model for the \( k \)-domination problem involves solving the classical DP in a new graph \( G' \). The graph \( G' \) is constructed by adding edges between vertices \( u \) and \( v \) for which \( uv \notin E \) and the distance \( d(u,v) \leq k \). Taking Fig. \ref{fig:kdom} as an example, assume that each edge in the graph has a length of 1. When we need to solve the \( 2 \)-domination problem, we connect vertex pairs \( u, v \) in the original graph \( G \) that satisfy the conditions \( uv \notin E \) and \( d(u,v) \leq 2 \), thereby forming \( G' \). We then solve the classical DP within \( G' \).

  \begin{figure}[H]
    \centering
    \includegraphics[width=10cm]{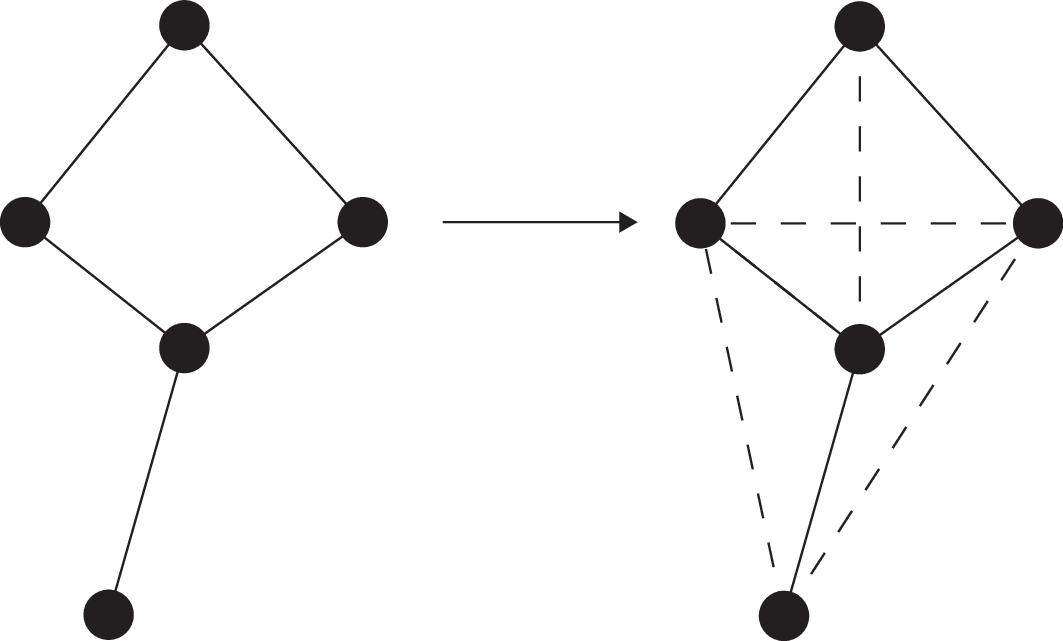}
    % \centering
    \caption {The illustration of building $G^{\prime}$ where $k = 2$.}
    \label{fig:kdom}
  \end{figure}
\end{itemize}

\subsection{Quadratic penalties}\label{sec:method3}

By analyzing the constraints of the variation of DP, We can observe that the constraints of all types of problems are composed of the following basic constraints. And structurally speaking, Constraints \ref{eq:basic}, \ref{eq:total} and \ref{eq:kdom} are the same. 

\begin{align}
  &\sum\limits_{j \in N[i]} X_{j} \ge 1 \quad \forall i \in V \tag{$\alpha$} \label{eq:basic}\\
  &X_{i} * X_{j} = 0 \quad \forall ij \in E \tag{$\beta$} \label{eq:independent}\\
  &\sum\limits_{j \in N(i)} X_{j} \ge 1 \quad \forall i \in V \tag{$\delta$} \label{eq:total}\\
  &\sum\limits_{ij \in E} [X_{i} * (1 - X_{j}) + X_{j} * (1 - X_{i}) ]  = |V| - \sum\limits_{i}^{|V|} X_{i} \tag{$\epsilon$} \label{eq:perfect} \\
  &\frac{1}{2}c(c - 1) =  \sum\limits_{ij \in E} X_{i} * X_{j} \tag{$\zeta$} \label{eq:clique} \\
  &\sum\limits_{j \in N^{\prime}[i]} X_{j} \ge 1 \quad \forall i \in V \tag{$\eta$} \label{eq:kdom}
\end{align}
A classical way to process the constraint during QUBO formulation is to add a penalty to the raw objective function. And a big challenge is to convert the constraints into quadratic penalties. Therefore, this will be the focus of discussion in this chapter. Next, we will categorize these constraints and present methods for converting them into quadratic penalties. \newline
% 因此这将是本章讨论的重点. 接下来, 我们将对这些约束条件进行分类讨论, 并给出将他们转换为quadratic penalties的方法.\newline
\begin{itemize}
\item Constraints \ref{eq:basic}, \ref{eq:total}, \ref{eq:kdom}\\
    
% 对于Constraints \ref{eq:basic}, \ref{eq:total} 与 \ref{eq:kdom}, 他们可以不失一般性的被描述为如下形式:
For constraints \ref{eq:basic}, \ref{eq:total}, and \ref{eq:kdom}, they can be generally described in the following form:
\begin{equation}
X_{1} + X_{2} + \dots + X_{n} \geq 1 \label{eq:c1normal}  
\end{equation}

When \( n = 1 \), this condition can be transformed into \( P(X_{1} - 1)^2 \). Here, \( P \) is the penalty coefficient. When \( n = 2 \), according to \cite{RN414}, it can be converted to \( P(1 - X_{1} - X_{2} + X_{1} * X_{2}) \). Therefore, we will mainly deal with the case where \( n \geq 3 \). For this situation, the literature \cite{RN414} points out that slack variables need to be introduced to convert inequality constraints into equality constraints, and the slack variables should be represented in the form of a combination of 0-1 variables. However, it does not provide a standardized form. To further illustrate this issue, we need to define the notation \( bl_{n} \) as the length of the binary number of \( n \), and similarly, \( bl_{n-1} \) is the binary length of \( n - 1 \). After introducing the slack variables, constraint \ref{eq:c1normal} will become:
\begin{equation}
  X_{1} + X_{2} + \dots + X_{n}  - S = 1 \label{eq:scidea}
\end{equation}

It can be determined that \(\{S\} = \{0, 1, 2, \ldots, n-1\}\), and then, we can express \( S \) in the form of a combination of 0-1 variables:
\begin{equation}
    S = C_{1}X_{1}^{\prime} + C_{2} * X_{2}^{\prime} + C_{3} * X_{3}^{\prime} + \dots \label{eq:c1sc}
\end{equation}
Where \( X_{*}^{\prime} \in \{0,1\} \), \( C_{*} \) are coefficients. And the range of \( S \) represented by Eq. \ref{eq:c1sc} must include the estimate of \( S \) in Eq. \ref{eq:scidea}, that is, \(\{0, 1, 2, \dots, n-1\}\). Once we construct such an \( S \), we can transform constraint \ref{eq:c1normal} into:
\begin{equation}
    P * [X_{1} + X_{2} + \dots + X_{n}  - (C_{1}X_{1}^{\prime} + C_{2} * X_{2}^{\prime} + C_{3} * X_{3}^{\prime} + \dots) - 1]^{2} \label{eq:sss}
\end{equation}
In quantum computing, the newly introduced 0-1 variables \( X_{*}^{\prime} \) will increase the number of qubits required for the algorithm. From this, we can see that when we use a quantum computer to solve the DP for a graph \( G = (V,E) \) with \( |V| \) vertices, the number of qubits we need may exceed \( |V| \). According to existing literature, there is no mathematical model for DP that can be modeled with only \( |V| \) qubits. Returning to our previous topic, after we introduce the slack variable \( S \), we next need to represent \( S \) using 0-1 variables, and this representation must encompass the original range of \( S \). Our approach is to use binary numbers to represent \( S \). When \(\{S\} = \{0, 1, 2, \dots, n-1\}\), the specific representation is as shown in Eq. \ref{eq:sc}.
\begin{equation}
  S = \sum\limits_{i=1}^{bl_{n-1}-1} X_{i}^{\prime}*2^{i-1} + (n - 1 - \sum\limits_{i=1}^{bl_{n-1}-1}2^{i-1}) * X_{bl_{n-1}}^{\prime} \label{eq:sc}
\end{equation}

Where \( X_{*}^{\prime} \) is a 0-1 variable. This representation requires the use of \( bl_{n-1} \) additional variables to represent \( S \). This structure has been used by \cite{RN426} to represent flow when solving the maximum flow problem. In this paper, we use it to handle slack variables. Next, we will prove that this representation method allows \( S \) to take any integer in the range \([0, n-1]\). That is, \(\{S\} = \{0,1,2, \dots, n-1\}\).
\begin{theorem}[Theorem 1]
  $\forall n \geq 3$, the range of \( S \) represented in Eq. \ref{eq:sc} encompasses all integers within \([0, n-1]\).
  \label{theorem1}
\end{theorem}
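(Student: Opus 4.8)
The plan is to make the binary-length bookkeeping explicit and then split on the value of the ``top'' slack bit $X_{bl_{n-1}}^{\prime}$. Write $m = n-1$ and $L = bl_{n-1}$, so that by the definition of binary length $2^{L-1} \le m \le 2^{L}-1$, and since $n \ge 3$ we have $m \ge 2$ and hence $L \ge 2$; this last point matters only to note that the sum $\sum_{i=1}^{L-1}$ appearing in Eq. \ref{eq:sc} is nonempty. Using the elementary identity $\sum_{i=1}^{L-1} 2^{i-1} = 2^{L-1}-1$, the coefficient of $X_{bl_{n-1}}^{\prime}$ in Eq. \ref{eq:sc} equals $c := m - 2^{L-1} + 1$. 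The first thing I would record is the two-sided bound $1 \le c \le 2^{L-1}$: the lower bound follows from $2^{L-1} \le m$, and the upper bound from $m \le 2^{L}-1$. In particular $c$ is a positive integer, so every choice of the $X_i^{\prime}$ yields a nonnegative integer value of $S$.

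Next I would split into the two cases $X_{L}^{\prime} = 0$ and $X_{L}^{\prime} = 1$. When $X_{L}^{\prime}=0$, Eq. \ref{eq:sc} reduces to $S = \sum_{i=1}^{L-1} X_i^{\prime}\,2^{i-1}$, which is exactly the standard binary expansion on $L-1$ bits; as the $X_i^{\prime}$ range over $\{0,1\}$ this attains precisely the integers in $[0,\,2^{L-1}-1]$. When $X_{L}^{\prime}=1$, we get $S = c + \sum_{i=1}^{L-1} X_i^{\prime}\,2^{i-1}$, which attains precisely the integers in $[c,\; c + 2^{L-1}-1]$. Since $c + 2^{L-1} - 1 = m$, this second interval is exactly $[c,\, m]$, so in particular its right endpoint is $n-1$.

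Finally I would glue the two pieces together: the set of attainable values of $S$ is the set of integers in $[0,\,2^{L-1}-1] \cup [c,\, m]$. Because $c \le 2^{L-1}$, the left endpoint $c$ of the second interval exceeds the right endpoint $2^{L-1}-1$ of the first by at most one, so the union has no gap; combined with the facts that it begins at $0$ and ends at $m = n-1$, this shows the attainable values are exactly $\{0,1,\dots,n-1\}$, as claimed. I do not expect a genuine obstacle in this argument; the only step that really requires care is the inequality $c \le 2^{L-1}$ (equivalently $m \le 2^{L}-1$), since this is precisely what forces the two binary ``blocks'' to overlap or abut rather than leave a hole, and it is exactly here that the particular choice of the final coefficient in Eq. \ref{eq:sc} is doing its work.
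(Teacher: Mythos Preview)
Your proof is correct, and it takes a genuinely different route from the paper. The paper proceeds by induction on $n$: it checks $n=3,4,5$ by hand, then in the inductive step distinguishes whether $bl_{k-1}=bl_k$ or not, handling the two situations separately. Your argument is direct: you fix $n$, compute the top coefficient $c=m-2^{L-1}+1$, observe the two-sided bound $1\le c\le 2^{L-1}$ from the definition of binary length, and then split on the value of the top bit to exhibit the range as the union of two intervals $[0,2^{L-1}-1]\cup[c,m]$ that abut or overlap precisely because $c\le 2^{L-1}$. In fact the heart of the paper's Case~2 is the same ``two-block'' decomposition you use, so what your approach buys is mainly economy: you avoid the induction scaffolding and the separate treatment of the boundary case $bl_{k-1}\neq bl_k$, and you isolate cleanly the single inequality $m\le 2^{L}-1$ that makes the two blocks meet.
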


\begin{proof}
    We use mathematical induction to prove this. First, we establish the base case.
  \begin{align*}
    &n=3, S = X_{1}^{\prime} + X_{2}^{\prime}, \{S\} = \{0,1,2\}\\
    &n=4, S = X_{1}^{\prime} + 2 * X_{2}^{\prime}, \{S\} = \{0,1,2,3\}\\
    &n=5, S = X_{1}^{\prime} + 2 * X_{2}^{\prime} + X_{3}^{\prime}, \{S\} = \{0,1,2,3,4\}
  \end{align*}
It can be observed that when \( n = 3, 4, 5 \), the condition clearly holds. \\
Assuming that the statement holds for \( n = k \), that is,\\
\begin{equation*}
    S_{k} = \sum\limits_{i=1}^{bl_{k-1}-1} X_{i}^{\prime}*2^{i-1} + (k - 1 - \sum\limits_{i=1}^{bl_{k-1}-1}2^{i-1}) * X_{bl_{k-1}}^{\prime}, \{S_{k}\} = \{0,1,2,\dots,k-2,k-1\}
\end{equation*}

We need to prove that the statement holds for \( n = k + 1 \).
\begin{equation*}
    S_{k+1} = \sum\limits_{i=1}^{bl_{k}-1} X_{i}^{\prime}*2^{i-1} + (k - \sum\limits_{i=1}^{bl_{k}-1}2^{i-1}) * X_{bl_{k}}^{\prime}, \{S_{k+1}\} = \{0,1,2,\dots,k-1,k\}
\end{equation*}
Proceeding further, we discuss two separate cases.\\
Case one, \( bl_{k-1} \neq bl_{k} \). In this situation, \( k-1 \) must be the largest binary number with \( bl_{k-1} \) digits, and the binary representation of \( k-1 \) must be \( \underbrace{11\ldots11}_{bl_{k-1}} \). Therefore, we can also express \( k-1 \) as:
\begin{equation}
  k-1 = \sum\limits_{i=1}^{bl_{k-1}}2^{i-1}
\end{equation}
Therefore, we have 
\begin{equation}
  k - 1 - \sum\limits_{i=1}^{bl_{k-1}-1}2^{i-1} = 2 ^{bl_{k-1} -1}
\end{equation}
so
\begin{equation}
  S_{k} = \sum\limits_{i=1}^{bl_{k-1}} X_{i}^{\prime}*2^{i-1} 
\end{equation}
as $bl_{k-1} = bl_{k} - 1$ then 
\begin{equation}
  S_{k+1} = S_{k} + (k - \sum\limits_{i=1}^{bl_{k}-1}2^{i-1}) * X_{bl_{k}}^{\prime}
\end{equation}
as $k - \sum\limits_{i=1}^{bl_{k}-1}2^{i-1} = k - \sum\limits_{i=1}^{bl_{k-1}}2^{i-1} =k - (k - 1) = 1$, so $S_{k+1} = S_{k} + X_{bl_{k}} ^{\prime}$. 
Given that \(\{S_{k}\} = \{0,1,2,\dots,k-2,k-1\}\), when \(X_{bl_{k}}^{\prime} = 0\), \(S_{k+1}\) can still take these values, and when \(X_{bl_{k}}^{\prime} = 1\), \(S_{k+1}\) can take the maximum value of \(k - 1 + 1 = k\). Therefore, \(\{S_{k+1}\} = \{0,1,2,\dots,k-1,k\}\).

In the second case, where \( bl_{k-1} = bl_{k} \), we can express \( S_{k} \) and \( S_{k+1} \) as follows:

\begin{align*}
    S_{k} &= X_{1}^{\prime} + 2 * X_{2}^{\prime} + 4 * X_{3}^{\prime} + \dots + 2 ^{m-1} * X_{m}^{\prime} + L * X_{m+1}^{\prime} \\
    S_{k+1} &= X_{1}^{\prime} + 2 * X_{2}^{\prime} + 4 * X_{3}^{\prime} + \dots + 2 ^{m-1} * X_{m}^{\prime} + (L+1) * X_{m+1}^{\prime}
\end{align*}

Here, \( m = bl_{k-1} - 1 = bl_{k} - 1 \), and we let \( Q = \sum_{i=1}^{m-1} 2^{i-1} \). Since the maximum value of \( S_{k} \) is \( k - 1 \), we have \( Q + L = k - 1 \). We divide the expression for \( S_{k+1} \) into two parts: 

\begin{itemize}
    \item Part 1: \( X_{1}^{\prime} + 2 \cdot X_{2}^{\prime} + 4 \cdot X_{3}^{\prime} + \dots + 2^{m-1} \cdot X_{m}^{\prime} \)
    \item Part 2: \( (L+1) \cdot X_{m+1}^{\prime} \)
\end{itemize}

When \( X_{m+1}^{\prime} = 0 \), the value range of Part 1 is \(\{0, 1, 2, \dots, Q - 1, Q\}\). Therefore, we only need to consider whether \( S_{k+1} \) can take the values \(\{Q+1, Q+2, \dots, k - 1, k\}\), which is \(\{Q+1, Q+2, \dots, Q+L, Q+L+1\}\).

It is clear that when \( S_{k+1} \) needs to take values greater than \( Q \), we must set \( X_{m+1}^{\prime} = 1 \), because Part 1 can contribute a maximum value of \( Q \). When Part 2 is involved, in order to achieve \(\{Q+1, Q+2, \dots, Q+L, Q+L+1\}\), Part 1 needs to contribute \(\{Q+1 - (L + 1), Q+2 - (L + 1), \dots, Q+L - (L + 1), Q+L+1 - (L + 1)\}\), which is \(\{Q - L, Q - L + 1, \dots, Q-1, Q\}\).

Since \( k-1 \) is not the largest number with a binary length of \( m+1 \) at this point, \( L \leq 2^{m} - 1 \). Thus, we have:

\[ Q - L = \sum_{i=1}^{m-1} 2^{i-1} - L = 2^{m} - 1 - L \geq 0 \]

This completes the proof.

\end{proof}
    \item Constraint \ref{eq:independent}
    
    For constraint \ref{eq:independent}, we simply need to express it as \( P \cdot (X_{i} \cdot X_{j}) \).
    \item Constraint \ref{eq:perfect}
    
    For constraint \ref{eq:perfect}, we first need to prove the following theorem.
    \begin{theorem}[Theorem 2]
        Within the DP, $\sum\limits_{ij \in E} [X_{i} * (1 - X_{j}) + X_{j} * (1 - X_{i}) ]  \geq |V| - \sum\limits_{i}^{|V|} X_{i}$. \label{theorem:perfect}
    \end{theorem}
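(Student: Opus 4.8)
The plan is to read both sides of the inequality combinatorially in terms of the partition of $V$ into $D = \{i \in V : X_i = 1\}$ and $V \setminus D = \{i \in V : X_i = 0\}$, and then to deduce the inequality directly from the defining property of a dominating set. The right-hand side $|V| - \sum\limits_{i}^{|V|} X_i$ is simply $|V \setminus D|$, the number of vertices not placed in the dominating set. For the left-hand side, observe that for a fixed edge $ij \in E$ the expression $X_i * (1 - X_j) + X_j * (1 - X_i)$ evaluates to $1$ exactly when one of $i, j$ lies in $D$ and the other lies in $V \setminus D$, and to $0$ otherwise; hence $\sum\limits_{ij \in E}[X_i * (1 - X_j) + X_j * (1 - X_i)]$ equals the number of edges of $G$ with one endpoint in $D$ and one endpoint in $V \setminus D$, which I will call the crossing edges.

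The first substantive step is to regroup the crossing edges by their unique endpoint in $V \setminus D$: since every crossing edge has exactly one endpoint outside $D$, the number of crossing edges equals $\sum\limits_{v \in V \setminus D} |N(v) \cap D|$, with no edge counted twice. This is the only piece of bookkeeping that needs care, and it is essentially the handshake identity applied to the cut between $D$ and $V \setminus D$. The second step is to invoke the domination constraint $\alpha$, which holds by hypothesis because we are working within the DP: for every $i \in V$ we have $\sum\limits_{j \in N[i]} X_j \ge 1$, and specializing to $i = v$ for $v \in V \setminus D$ this reduces to $\sum\limits_{j \in N(v)} X_j = |N(v) \cap D| \ge 1$, since $X_v = 0$ removes the diagonal term.

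Combining the two steps and summing the pointwise bound $|N(v) \cap D| \ge 1$ over all $v \in V \setminus D$ yields
$$\sum\limits_{ij \in E}[X_i * (1 - X_j) + X_j * (1 - X_i)] = \sum\limits_{v \in V \setminus D} |N(v) \cap D| \ge |V \setminus D| = |V| - \sum\limits_{i}^{|V|} X_i,$$
which is exactly the claim. I do not anticipate a genuine obstacle here: the content of the theorem is just that a dominating set must send at least one edge to every vertex it does not contain, and the only thing requiring verification is that the degree-two polynomial on the left really counts crossing edges with multiplicity one. A remark worth including afterward is that this inequality is what makes constraint $\epsilon$ tractable: the equality $\epsilon$ is equivalent to the reverse inequality $\sum\limits_{ij \in E}[X_i * (1 - X_j) + X_j * (1 - X_i)] \le |V| - \sum\limits_{i}^{|V|} X_i$, so it can be enforced with a one-sided (slack-variable) penalty rather than a squared equality penalty.
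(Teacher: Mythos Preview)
Your proof is correct and follows essentially the same approach as the paper's: interpret the left side as the number of crossing edges between $D$ and $V\setminus D$, the right side as $|V\setminus D|$, and use the domination property to conclude that each vertex of $V\setminus D$ contributes at least one crossing edge. One small correction to your closing remark: the paper enforces constraint~$\epsilon$ not with a slack variable but simply with the linear penalty $P\cdot(\text{LHS}-\text{RHS})$, which is legitimate precisely because Theorem~\ref{theorem:perfect} guarantees this quantity is already nonnegative.
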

      
    \begin{proof}    
      In DP, $|V| - \sum\limits_{i}^{|V|} X_{i}$ denotes the number of vertexes in $V \backslash D$ and $\sum\limits_{ij \in E} [X_{i} * (1 - X_{j}) + X_{j} * (1 - X_{i}) ]$ counts the number of edges with one vertex in $V \backslash D$ and another one in $D$. As we know, each node in $V \backslash D$ is dominated by at least one node in $D$. So $\sum\limits_{ij \in E} [X_{i} * (1 - X_{j}) + X_{j} * (1 - X_{i}) ]$ will not be less than $|V \backslash D|$.        
    \end{proof}
      Leveraging Theorem \ref{theorem:perfect}, we can transform constraint \ref{eq:perfect} into \\
      \begin{equation*}
        P*(\sum\limits_{ij \in E} [X_{i} * (1 - X_{j}) + X_{j} * (1 - X_{i}) ]  - |V| + \sum\limits_{i}^{|V|} X_{i})
      \end{equation*}
      Due to the establishment of the inequality relationship in Theorem \ref{theorem:perfect}, we do not need to square this constraint.

    \item Constraint \ref{eq:clique}

    For constraint \ref{eq:clique}, we need to first prove Theorem \ref{theorem:clique}.
    \begin{theorem}[Theorem 3]
        Within the DP, $\frac{1}{2}c(c - 1) \geq  \sum\limits_{ij \in E} X_{i} * X_{j}$. \label{theorem:clique}
    \end{theorem}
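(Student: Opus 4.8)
The plan is to recognize both sides of the claimed inequality as concrete graph quantities and then finish with a one-line edge-counting bound. First I would note that, by the meaning of the binary variables, $c = \sum_{i=1}^{|V|} X_{i} = |D|$ is exactly the size of the dominating set $D$. Next I would identify $\sum_{ij \in E} X_{i} X_{j}$ with the number of edges of the induced subgraph $G[D]$: for an edge $ij \in E$ the product $X_{i} X_{j}$ equals $1$ precisely when both endpoints lie in $D$ and equals $0$ otherwise, and since the sum ranges over the edge set of $G$ each edge is counted at most once, so $\sum_{ij \in E} X_{i} X_{j} = |E(G[D])|$.

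With these two identifications the inequality reads $|E(G[D])| \le \binom{|D|}{2}$, because $\tfrac{1}{2} c(c-1) = \binom{c}{2} = \binom{|D|}{2}$. This is immediate: $G[D]$ is a simple graph on $|D|$ vertices, and a simple graph on $m$ vertices has at most $\binom{m}{2}$ edges, with equality exactly for the complete graph $K_{m}$. Hence $\sum_{ij \in E} X_{i} X_{j} = |E(G[D])| \le \binom{|D|}{2} = \tfrac{1}{2} c(c-1)$, which is the assertion of Theorem~\ref{theorem:clique}.

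I do not anticipate any genuine obstacle; the argument is purely a counting observation. The only points worth stating with care are (i) that the sum is taken over the edge set of $G$, so no edge is double-counted, and (ii) that equality holds if and only if $G[D]$ is complete, i.e.\ $D$ is a clique. Point (ii) is what makes constraint~$\zeta$ faithful to $\gamma_{cl}$, and — exactly as with Theorem~\ref{theorem:perfect} and constraint~$\epsilon$ — the fact that the inequality is one-sided is what allows the associated penalty to be taken as the (already quadratic) expression $P\cdot\big(\tfrac{1}{2}c(c-1) - \sum_{ij \in E} X_{i} X_{j}\big)$ without squaring, since that expression is nonnegative and vanishes precisely when $D$ is a clique.
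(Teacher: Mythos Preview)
Your proposal is correct and follows essentially the same approach as the paper: identify $c$ with $|D|$, identify $\sum_{ij\in E} X_i X_j$ with the number of edges of $G[D]$, and conclude via the trivial bound $|E(G[D])|\le \binom{|D|}{2}$. Your write-up is in fact more careful than the paper's one-line version, and your observation about the equality case and why the penalty need not be squared matches exactly how the paper uses the theorem.
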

      
      \begin{proof}
      In DP, $c = \sum\limits_{i=1}^{|V|} X_{i}$, which is size of $D$ and $\sum\limits_{ij \in E} X_{i} * X_{j}$ denotes the number of edges in $D$. Apparently, $\frac{1}{2}c(c - 1) \geq  \sum\limits_{ij \in E} X_{i} * X_{j}$ since $\frac{1}{2}c(c - 1)$ denotes the number of edges of a complete graph with $|D|$ nodes. 
      \end{proof}
      Similarly, we can express constraint \ref{eq:clique} as:
      \begin{equation*}
        P*[\frac{1}{2}c(c - 1) -  \sum\limits_{ij \in E} X_{i} * X_{j}]
      \end{equation*}
\end{itemize}

In this section, we have completed the process of transforming the constraints of these problems into quadratic penalties. During this process, we find that for constraints such as constraints \ref{eq:basic}, \ref{eq:total}, and \ref{eq:kdom}, there is a possibility of introducing slack variables. This could lead to a situation where solving these combinatorial optimization problems on a quantum computer requires more qubits than the number of vertices in the $G$. We can estimate the maximum number of qubits required for the methods proposed in this paper.

\begin{theorem}[Theorem 4] \label{theorem:4}
    The upper bound of the number of variables used by the method proposed in this paper to form the DP is $V + 2E$. 
\end{theorem}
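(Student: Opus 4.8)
\emph{Proof proposal.} The plan is a vertex-by-vertex count of the binary variables the construction introduces, followed by a one-line summation. For the classic DP the QUBO model retains the $|V|$ vertex variables $X_i$, its objective $\sum_i X_i$ creates no new variables, and its only constraints are the family $\alpha$, namely $\sum_{j\in N[i]}X_j\ge 1$ for $i\in V$. By the discussion preceding Theorem~\ref{theorem1}, the constraint at vertex $i$ needs no slack variable when $|N[i]|\le 2$, and when $|N[i]|\ge 3$ the encoding in Eq.~\ref{eq:sc} uses exactly $bl_{|N[i]|-1}$ new binary variables (the terms $X_1',\dots,X_{bl_{n-1}}'$ for $n=|N[i]|$). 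Writing $d_i$ for the degree of $i$, so that $|N[i]|=d_i+1$, the slack contribution of vertex $i$ is $bl_{d_i}$ if $d_i\ge 2$ and $0$ otherwise. Hence the total number of variables equals $|V|+\sum_{i\in V:\,d_i\ge 2} bl_{d_i}$, and the theorem reduces to showing $\sum_{i\in V:\,d_i\ge 2} bl_{d_i}\le 2|E|$.

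I would prove this in two steps. First, the elementary inequality $bl_d\le d$ for every integer $d\ge 1$: since $bl_d=\lfloor\log_2 d\rfloor+1$ this is equivalent to $2^{\,d-1}\ge d$, which follows by induction on $d$ (base $d=1$: $2^0\ge 1$; step: $2^d=2\cdot 2^{d-1}\ge 2d\ge d+1$). Second, since all degrees are nonnegative, the handshake lemma gives
\[
  \sum_{i\in V:\,d_i\ge 2} bl_{d_i}\;\le\;\sum_{i\in V:\,d_i\ge 2} d_i\;\le\;\sum_{i\in V} d_i\;=\;2|E|,
\]
so the number of variables is at most $|V|+2|E|$, i.e.\ the bound $V+2E$ claimed in the statement.

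This argument has no real obstacle; the inequality $bl_d\le d$ is its only substantive ingredient and the rest is bookkeeping. The points requiring care are that the slack count at vertex $i$ is governed by $bl_{|N[i]|-1}=bl_{d_i}$ (not $bl_{|N[i]|}$), that the low-degree vertices $d_i\in\{0,1\}$ contribute nothing and so cannot break the count, and that no other variables enter the classic-DP model (the linear objective and the quadratic penalties $P[\sum_{j\in N[i]}X_j-S-1]^2$ introduce none beyond the slacks). It is also worth recording that the bound is essentially tight: for a disjoint union of cycles every vertex has $d_i=2$, hence $|N[i]|=3$ and each constraint $\alpha$ contributes $bl_2=2$ slack variables, for a total of exactly $|V|+2|V|=|V|+2|E|$ variables.
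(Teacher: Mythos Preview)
Your proof is correct and follows essentially the same route as the paper's: bound the slack variables at vertex $i$ by $bl_{d_i}$, invoke $bl_{d_i}\le d_i$, and apply the handshake lemma to get $|V|+\sum_i bl_{d_i}\le |V|+2|E|$. Your version is simply more careful---you actually prove the inequality $bl_d\le d$ (which the paper uses without comment), handle the low-degree cases explicitly, and add the cycle example showing tightness, none of which appear in the original.
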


\begin{proof}
    For the each node $i$, the size of variables used to represent constraint \ref{eq:basic} is not more than $bl_{d_{i} + 1 - 1}$, and $d_{i}$ is the degree of $i$, since the extra variables is not needed if $d_{i} \leq 2$. So the overall variables is not more than $|V| + \sum\limits_{i=1}^{|V|}bl_{d_{i}}$. And 
    \begin{align*}
        |V| + \sum\limits_{i=1}^{|V|}bl_{d_{i}} \leq|V| + \sum\limits_{i=1}^{|V|} d_{i} = |V|  + 2|E|
    \end{align*}
\end{proof}

According to Theorem \ref{theorem:4}, we need at most \( |V| + 2|E| \) variables to model DP in a graph $G = (V,E)$, \\
and 
\begin{align*}
    &|E|\leq \frac{1}{2}|V|(|V| - 1) \\
    \leftrightarrow \quad & 2|E| \leq |V|^{2} - |V| \\
    \leftrightarrow \quad & |V| + 2|E| \leq |V|^{2}
\end{align*}
This means that compared to the \( |V|^2 \) variables required by \cite{RN416}, the number of qubits we need is fewer.

\subsection{QUBO formulations examples}\label{sec:method4}
For the general QUBO form, such as the example provided in \cite{RN414} in Eq. \ref{eq:exp1}.
\begin{equation}
    minimize \quad y = -5x_{1} - 3x_{2} - 8x_{3} - 6x_{4} + 4x_{1}x_{2} + 8x_{1}x_{3} + 2x_{2}x_{3} + 10x_{3}x_{4} \label{eq:exp1}  
\end{equation}
Since \( x_{i} \in \{0,1\} \), it follows that \( x_{i} = x_{i}^{2} \). We can rewrite the model in its standard form.
\begin{equation} 
    minimize \quad y = 
    \begin{pmatrix}
    
    x_{1} & x_{2} & x_{3} & x_{4}\\
    \end{pmatrix}
    \begin{pmatrix}
    -5 & 2 & 4 & 0\\
    2  & -3& 1 & 0\\
    4  & 1& -8 & 5\\
    0  & 0& 5& -6\\
    \end{pmatrix}
    \begin{pmatrix}
    x_{1}  \\
    x_{2} \\
    x_{3} \\
    x_{4}
    \end{pmatrix}
\end{equation}

In this section, we will provide QUBO models for these variant problems with specific examples for readers to verify. Readers can choose an appropriate penalty coefficient \( P \) based on actual conditions and simplify the following models as described (for example, converting single-variable \( x_{i} \) into \( x_{i}^{2} \) ), ultimately extracting the QUBO matrix. The graph \( G \) we use is shown in Fig. \ref{fig:4node} (this figure is also used by \cite{RN416} to solve the DP on a quantum computer), where the length of each edge is 1. In the following models, \( x_{0}, x_{1}, x_{2}, x_{3} \) are the 0-1 variables associated with the 4 vertices of graph \( G \). For \( x_{i}, i \geq 4 \), these are additional variables introduced to represent slack variables.
\begin{figure}[H]
    \centering
    \includegraphics[width=5cm]{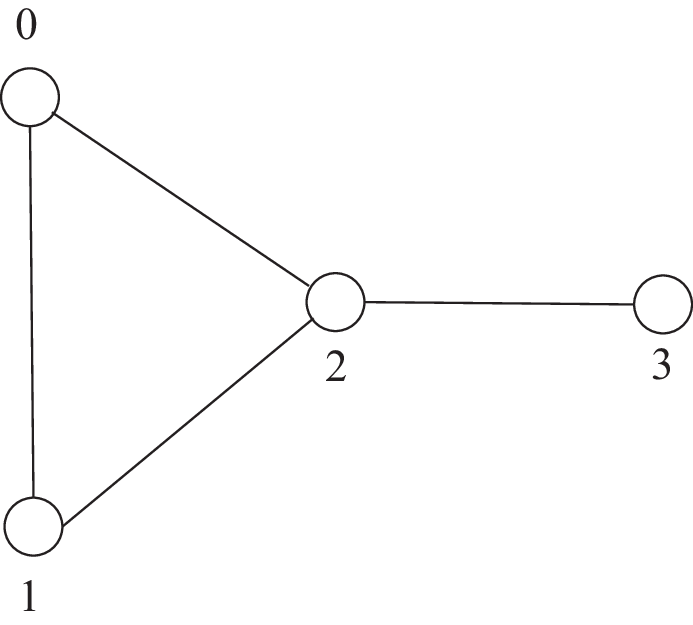}
    % \centering
    \caption {A graph $G$ with 4 nodes.}
    \label{fig:4node}
  \end{figure}
We list the QUBO model corresponding to the DP for readers to verify.
\begin{itemize}
    \item $\gamma$
    \begin{align*}
        \begin{split}
        Minimize \quad f = &x_{0} + x_{1} + x_{2} + x_{3} \\
        &+ P \left(x_{0} + x_{1} + x_{2} - \left(x_{4} + x_{5}\right) - 1\right)^{2} \\
        &+ P \left(x_{1} + x_{0} + x_{2} - \left(x_{6} + x_{7}\right) - 1\right)^{2} \\
        &+ P \left(x_{2} + x_{0} + x_{1} + x_{3} - \left(x_{8} + 2 \cdot x_{9}\right) - 1\right)^{2} \\
        &+ P \left(1 - x_{3} - x_{2} + x_{3} \cdot x_{2}\right)
        \end{split}
    \end{align*}
    
    For the same graph, \cite{RN416} requires 16 qubits, while we only need 10 qubits.
    \item $\gamma_{i}$
    \begin{equation*}
        \begin{split}
        Minimize \quad f = &x_{0} + x_{1} + x_{2} + x_{3} \\
        &+ P \left( x_{0} + x_{1} + x_{2} - (x_{4} + x_{5}) - 1 \right)^{2} \\
        &+ P \left( x_{1} + x_{0} + x_{2} - (x_{6} + x_{7}) - 1 \right)^{2} \\
        &+ P \left( x_{2} + x_{0} + x_{1} + x_{3} - (x_{8} + 2 \cdot x_{9}) - 1 \right)^{2} \\
        &+ P \left( 1 - x_{3} - x_{2} + x_{3} \cdot x_{2} \right) \\
        &+ P \cdot x_{0} \cdot x_{1} + P \cdot x_{0} \cdot x_{2} + P \cdot x_{1} \cdot x_{2} + P \cdot x_{2} \cdot x_{3}
        \end{split}
    \end{equation*}
    \item $\gamma_{t}$
    \begin{equation*}
        \begin{split}
        Minimize \quad f = &\, x_{0} + x_{1} + x_{2} + x_{3} \\
        &+ P \left(1 - x_{1} - x_{2} + x_{1} \cdot x_{2}\right) \\
        &+ P \left(1 - x_{0} - x_{2} + x_{0} \cdot x_{2}\right) \\
        &+ P \left(x_{0} + x_{1} + x_{3} + x_{4} + x_{5} - 1\right)^{2} \\
        &+ P (x_{2} - 1)^{2}
        \end{split}
    \end{equation*}
    \item $\gamma_{per}$
    \begin{equation*}
        \begin{split}
        Minimize \quad f = &\, x_{0} + x_{1} + x_{2} + x_{3} \\
        &+ P \left( x_{0} + x_{1} + x_{2} - (x_{4} + x_{5}) - 1 \right)^{2} \\
        &+ P \left( x_{1} + x_{0} + x_{2} - (x_{6} +  x_{7}) - 1 \right)^{2} \\
        &+ P \left( x_{2} + x_{0} + x_{1} + x_{3} - ( x_{8} + 2 \cdot x_{9}) - 1 \right)^{2} \\
        &+ P \left( 1 - x_{3} - x_{2} + x_{3} \cdot x_{2} \right) \\
        &+ P \left( x_{0} \cdot (1 - x_{1}) + x_{1} \cdot (1 - x_{0}) + x_{0} \cdot (1 - x_{2}) + x_{2} \cdot (1 - x_{0}) \right. \\
        &\quad \left. + x_{1} \cdot (1 - x_{2}) + x_{2} \cdot (1 - x_{1}) + x_{2} \cdot (1 - x_{3}) + x_{3} \cdot (1 - x_{2}) \right) \\
        &- P \cdot 4 + P \cdot (x_{0} + x_{1} + x_{2} + x_{3})
        \end{split}
    \end{equation*}
    \item $\gamma_{cl}$
    \begin{equation*}
        \begin{split}
        Minimize \quad f = &\, x_{0} + x_{1} + x_{2} + x_{3} \\
        &+ P \left( x_{0} + x_{1} + x_{2} - ( x_{4} +  x_{5}) - 1 \right)^{2} \\
        &+ P \left( x_{1} + x_{0} + x_{2} - ( x_{6} +  x_{7}) - 1 \right)^{2} \\
        &+ P \left( x_{2} + x_{0} + x_{1} + x_{3} - ( x_{8} + 2 \cdot x_{9}) - 1 \right)^{2} \\
        &+ P \left( 1 - x_{3} - x_{2} + x_{3} \cdot x_{2} \right) \\
        &+ P \left( 0.5 \cdot (x_{0} + x_{1} + x_{2} + x_{3}) \cdot (x_{0} + x_{1} + x_{2} + x_{3} - 1) \right. \\
        &\quad \left. - (x_{0} \cdot x_{1} + x_{0} \cdot x_{2} + x_{1} \cdot x_{2} + x_{2} \cdot x_{3}) \right)
        \end{split}
    \end{equation*}
    \item $\gamma_{iper}$
    \begin{equation*}
        \begin{split}
        Minimize \quad f = &\, x_{0} + x_{1} + x_{2} + x_{3} \\
        &+ P \left( x_{0} + x_{1} + x_{2} - (x_{4} + x_{5}) - 1 \right)^{2} \\
        &+ P \left( x_{1} + x_{0} + x_{2} - (x_{6} + x_{7}) - 1 \right)^{2} \\
        &+ P \left( x_{2} + x_{0} + x_{1} + x_{3} - (x_{8} + 2 \cdot x_{9}) - 1 \right)^{2} \\
        &+ P \left( 1 - x_{3} - x_{2} + x_{3} \cdot x_{2} \right) \\
        &+ P \cdot x_{0} \cdot x_{1} + P \cdot x_{0} \cdot x_{2} + P \cdot x_{1} \cdot x_{2} + P \cdot x_{2} \cdot x_{3} \\
        &+ P \left( x_{0} \cdot (1 - x_{1}) + x_{1} \cdot (1 - x_{0}) + x_{0} \cdot (1 - x_{2}) + x_{2} \cdot (1 - x_{0}) \right. \\
        &\quad \left. + x_{1} \cdot (1 - x_{2}) + x_{2} \cdot (1 - x_{1}) + x_{2} \cdot (1 - x_{3}) + x_{3} \cdot (1 - x_{2}) \right) \\
        &- P\cdot4 + P \cdot (x_{0} + x_{1} + x_{2} + x_{3})
        \end{split}
    \end{equation*}
    \item $\gamma_{tper}$
    \begin{equation*}
        \begin{split}
        Minimize \quad f = &\, x_{0} + x_{1} + x_{2} + x_{3} \\
        &+ P \left(1 - x_{1} - x_{2} + x_{1} \cdot x_{2}\right) \\
        &+ P \left(1 - x_{0} - x_{2} + x_{0} \cdot x_{2}\right) \\
        &+ P \left(x_{0} + x_{1} + x_{3} +  x_{4} +  x_{5} - 1\right)^{2} \\
        &+ P \left((x_{2} - 1)^{2}\right) \\
        &+ P \left( x_{0} \cdot (1 - x_{1}) + x_{1} \cdot (1 - x_{0}) + x_{0} \cdot (1 - x_{2}) + x_{2} \cdot (1 - x_{0}) \right. \\
        &\quad \left. + x_{1} \cdot (1 - x_{2}) + x_{2} \cdot (1 - x_{1}) + x_{2} \cdot (1 - x_{3}) + x_{3} \cdot (1 - x_{2}) \right) \\
        &- P\cdot4 + P \cdot (x_{0} + x_{1} + x_{2} + x_{3})
        \end{split}
    \end{equation*}
    \item $\gamma_{k}, \quad k = 2$
    \begin{equation*}
        \begin{split}
        Minimize \quad f = &\, x_{0} + x_{1} + x_{2} + x_{3} \\
        &+ P \left(x_{0} + x_{1} + x_{2} + x_{3} - (x_{4} + 2 \cdot x_{5}) - 1\right)^{2} \\
        &+ P \left(x_{1} + x_{0} + x_{2} + x_{3} - (x_{6} + 2 \cdot x_{7}) - 1\right)^{2} \\
        &+ P \left(x_{2} + x_{0} + x_{1} + x_{3} - (x_{8} + 2 \cdot x_{9}) - 1\right)^{2} \\
        &+ P \left(x_{3} + x_{2} + x_{0} + x_{1} - (x_{10} + 2 \cdot x_{11}) - 1\right)^{2}
        \end{split}
    \end{equation*}
\end{itemize}

\section{Conclusion}\label{sec:conclusion}

This paper primarily investigates the QUBO modeling of the DP and its variants. We first modeled the DP and its seven variant problems as 0-1 integer programming models. Then, we analyzed the constraints of these problems and categorized them into four types: (\ref{eq:basic}, \ref{eq:total}, \ref{eq:kdom}), \ref{eq:independent}, \ref{eq:perfect}, and \ref{eq:clique}. For each type of constraint condition, we provided methods to convert them into quadratic penalties. By adding the obtained quadratic penalties to the original objective function, we completed the QUBO modeling for the DP problem and its variants. When dealing with constraints (\ref{eq:basic}, \ref{eq:total}, \ref{eq:kdom}), we used a transformation method different from \cite{RN416}, which allows us to model DP with \( |V| + 2|E| \) variables. This will lower the barrier for solving DP using quantum computers. The research findings on QUBO modeling for DP and its variant problems in this paper will enable the solution of such problems on quantum computers, greatly accelerating the development of quantum algorithms for these problems, especially the variants. Future research mainly includes using quantum algorithms such as QAOA and QA to solve these QUBO models or finding more suitable QUBO modeling methods, such as those with fewer variables.

\label{}

\section*{Data Availability}
The data used to support the findings of this study are included within the article.

\section*{Conflicts of interest}
The authors declare that they have no conflicts of interest that could have appeared to influence the work reported in this paper.

\section*{Funding statement}
 
This work is supported by National Natural Science Foundation of China (No. 12331014).

\section*{Declaration of Generative AI and AI-assisted technologies in the writing process}

During the preparation of this work the authors used chatgpt in order to improve readability and language. After using this tool, the authors reviewed and edited the content as needed and take full responsibility for the content of the publication.

%% The Appendices part is started with the command \appendix;
%% appendix sections are then done as normal sections
%% \appendix

%% \section{}
%% \label{}

%% If you have bibdatabase file and want bibtex to generate the
%% bibitems, please use
%%
 \bibliographystyle{elsarticle-harv} 
 \bibliography{qubo}

%% else use the following coding to input the bibitems directly in the
%% TeX file.

% \begin{thebibliography}{00}

% %% \bibitem[Author(year)]{label}
% %% Text of bibliographic item

% \bibitem[ ()]{}

% \end{thebibliography}

\end{CJK}
\end{document}